\newtheorem{theorem}{Theorem}[section]
\newtheorem{lemma}[theorem]{Lemma}
\begin{document}

\title{Topological Phases in Non-Hermitian Aubry-Andr\'e-Harper Models}
\author{Qi-Bo Zeng}
\author{Yan-Bin Yang}
\author{Yong Xu}
\email{yongxuphy@tsinghua.edu.cn}
\affiliation{Center for Quantum Information, IIIS, Tsinghua University, Beijing, 100084, P. R. China}

\begin{abstract}
Topological phases have recently witnessed a rapid progress in non-Hermitian systems. Here we
study a one-dimensional non-Hermitian Aubry-Andr\'e-Harper model with imaginary periodic or quasiperiodic modulations.
We demonstrate that the non-Hermitian off-diagonal
AAH models can host zero-energy modes at the edges. In contrast to the Hermitian case, the zero-energy mode can be
localized only at one edge. Such a topological phase corresponds to the existence of a quarter winding number defined
by eigenenergy in momentum space. We further find the coexistence of a zero-energy mode located only at one edge and
topological nonzero energy edge modes characterized by a generalized Bott index. In the incommensurate case, a topological non-Hermitian quasicrystal is
predicted where all bulk states and two topological edge states are localized at one edge. Such topological edge modes
are protected by the generalized Bott index.
Finally, we propose an experimental scheme to realize these non-Hermitian models in electric circuits. Our findings add a new direction for exploring topological properties in Aubry-Andr\'e-Harper models.
\end{abstract}

\maketitle
\date{today}

Topological phases have become one of the most fascinating and rapidly developing research field in condensed matter physics in the past decade, both theoretically and experimentally \cite{Hasan, Qi, Armitage}. Despite being found in Hermitian
systems, topological phases have recently sparked tremendous interests in non-Hermitian systems~\cite{Rudner1,Esaki,Bardyn,Poshakinskiy,Zeuner,Malzard,Rudner2,Aguado2016SR,Lee3,Molina,Joglekar2016PRA,Zeng,Weimann,Leykam,Xu,
Menke,Xiao,Lieu,Zyuzin,Fan2018PRB,HZhou18,Yin,Xiong,Shen,Kunst,Yao1,Yao2,Gong,Kawabata2,Takata,YChen,WYi2018,JHu2018,
HZhang2018,Rechtsman2018,Song2018,RYu2018,Kunst19,Xu2019,Sato2019,HZhou19,Kawabata1,Herviou}. Such systems
exist naturally or artificially due to gain or loss arising from the finite lifetime of quasiparticles~\cite{Fu2014}, the interaction with environment~\cite{Bender1, Bender2}, the engineered complex refractive index~\cite{Musslimani,Feng2017Nat}
and the engineered Laplacian in electric circuits~\cite{RYu2018,Thomale1}. A number of new topological phases have been found, such as
anomalous edge modes corresponding to half a winding number in a non-Hermitian Su-Schrieffer-Heeger model~\cite{Lee3}, Weyl exceptional rings with both quantized Chern number
and quantized Berry phase~\cite{Xu} and anomalous corner modes in non-Hermitian higher order topological insulators~\cite{Ueda2019PRL,Edvardsson2018,Luo2019}.

While there have been extensive studies of topological non-Hermitian phenomena including classification of non-Hermitian topological phases~\cite{Gong,Sato2019,HZhou19}, the one-dimensional (1D) Aubry-Andr\'e-Harper (AAH) model~\cite{Aubry, Harper} has been largely overlooked and not well explored. The AAH model, a 1D system modulated by an on-site cosinusoidal potential, plays a very important role in investigating the Anderson localization and topological phases~\cite{Siggia83,Kohmoto83,DasSarma88,DasSarma90,DasSarma09,DasSarma10,LJLang, Zilberberg, Zilberberg2012b, Ganeshan, Cai, Chong, Hu, Zeng2, Yi}.
Specifically, the model can be mapped to a two-dimensional (2D) Hall effect system with topological edge modes~\cite{LJLang,Zilberberg,Zilberberg2012b}.
Further generalization to an off-diagonal AAH model leads to a topological phase with zero-energy modes. Another very
interesting aspect is that this model gives rise to a topological quasicrystal when the incommensurate modulation is
considered~\cite{Zilberberg,Zilberberg2012b}.

In this paper, we study the topological phases in a non-Hermitian off-diagonal AAH model with a purely imaginary cosinusoidal modulation and asymmetric hopping under both commensurate and incommensurate scenarios. We find that (i) non-Hermitian
AAH models can host zero-energy modes at the edges. In contrast to the Hermitian counterpart, the zero-energy mode can be
localized only at one edge. Such a topological phase corresponds to the existence of the structure of energy bands
in momentum space enclosing a branch point of order 3~\cite{Needham}, in contrast to the previously discovered
structure enclosing a branch point of order 1 in the SSH model~\cite{Lee3}. That implies that starting at any quasimomentum $k=k_0$ corresponding to an energy $E_0$, we will return to this original energy $E(k_0)$ if we continuously
follow the value of the energy $E(k)$ as the quasimomentum varies from $k_0$ to $k_0+8\pi$. This leads to
a winding number being one quarter defined by the eigenenergy.
(ii) We further find the coexistence of a zero-energy mode located only at one edge
and nonzero energy edge modes. For the latter edge modes, we show that they can be characterized by a generalized Bott
index in a system under open boundary conditions (OBCs).
(iii) For incommensurate non-Hermitian quasicrystals, we demonstrate that both two edge modes and all bulk states are
localized at one edge, in stark contrast to the Hermitian case where all bulk states are extended and two edge modes are localized at two edges. Such topological edge modes can also be characterized by the generalized Bott index.
Finally, we propose an experimental scheme with electric circuits for realizing the
non-Hermitian AAH models.

\emph{Model Hamiltonian}.--- We start by considering the following 1D non-Hermitian AAH model
\begin{equation}\label{Hamiltonian}
  \hat{H} = \sum_{j} t(1 -\gamma+ \lambda_j) \hat{c}_{j+1}^\dagger \hat{c}_j +t(1 +\gamma+ \lambda_j) \hat{c}_{j}^\dagger \hat{c}_{j+1},
\end{equation}
where $\hat{c}_j^\dagger$ ($\hat{c}_j$) is the creation (annihilation) operator for a spinless particle at site $j$,
$t$ and $\gamma$ denote the hopping strength and an asymmetric hopping strength, respectively, and $\lambda_j = i \lambda \cos (2\pi \alpha j + \delta)$ depicts an imaginary modulation with $\lambda$, $\alpha$ and $\delta$ being real parameters.
When $\alpha$ is a rational number such that $\alpha=p/q$ with $p$ and $q$ being relatively prime positive integers, the modulation is periodic with $q$ being its period,
whereas the modulation becomes quasiperiodic, when $\alpha$ is an irrational number.

To determine the eigenenergy and eigenstates of the system under OBCs, we write the Hamiltonian
as $\hat{H}=\hat{c}^\dagger H\hat{c}$ where $\hat{c}=(\begin{array}{cccc}
                                                       \hat{c}_1 & \hat{c}_2 & \cdots & \hat{c}_L
                                                     \end{array})
$ with $L$ being the number of sites and diagonalize the Hamiltonian $H^\dagger$ and $H$ allowing us to obtain
both the left and right eigenstates $|\Psi_n^L\rangle$ and $|\Psi_n^R\rangle$ which satisfy
$H^\dagger|\Psi_n^L\rangle=E_n^*|\Psi_n^L\rangle$ and $H|\Psi_n^R\rangle=E_n|\Psi_n^R\rangle$
($E_n$ is the corresponding eigenenergy), respectively.
In the commensurate case, the Hamiltonian is translational invariant with respect to $q$ sites
under periodic boundary conditions. As a result, we can write the
Hamiltonian in momentum space as $\hat{H}=\sum_k \hat{c}^\dagger_k H(k)\hat{c}_k$ where
$\hat{c}_k=(\begin{array}{cccc}
                                                       \hat{c}_{1k} & e^{-ik/q}\hat{c}_{2k} & \cdots & e^{-i(q-1)k/q}\hat{c}_{qk}
\end{array})$ with $k \in [0, 2\pi]$ and $H(k)_{mn}=\delta_{mn-1}t_m +\delta_{m-1n}t_n^\prime+\delta_{m1}\delta_{nq}t_q^\prime e^{-ik}+\delta_{mq}\delta_{n1}t_qe^{ik}$ with $t_j=t(1+\gamma+\lambda_j)$ and $t_j^\prime=t(1-\gamma+\lambda_j)$.
Note that we have scaled the quasimomentum $k$ so that $k \in [0, 2\pi]$.
The left and right eigenvectors in momentum space $|\Psi_n^L(k)\rangle$ and $|\Psi_n^R(k)\rangle$
can be obtained by diagonalizing the matrix $H^\dagger(k)$ and $H(k)$, respectively.

\emph{Zero-energy modes in the commensurate AAH model}--- Let us first consider the commensurate modulation. To show the topological features, we first consider the simplest case with $\alpha=1/4$. In Fig.~\ref{fig1}(a),
we map out the topological phase diagram with respect to $\delta$ and $\gamma$, showing four distinct topological phases
characterized by $(W,N_e)$, where $W$ and $N_e$ denote the winding number of the Hamiltonian in momentum space and the number of
zero-energy edge eigenstates, respectively. These four phases correspond to $(W,N_e)=(-1,2),(-1/2,1),(-1/2,0),(0,0)$, which
will be elaborated on in the following discussion.

\begin{figure}[t]
  \includegraphics[width=3.3in]{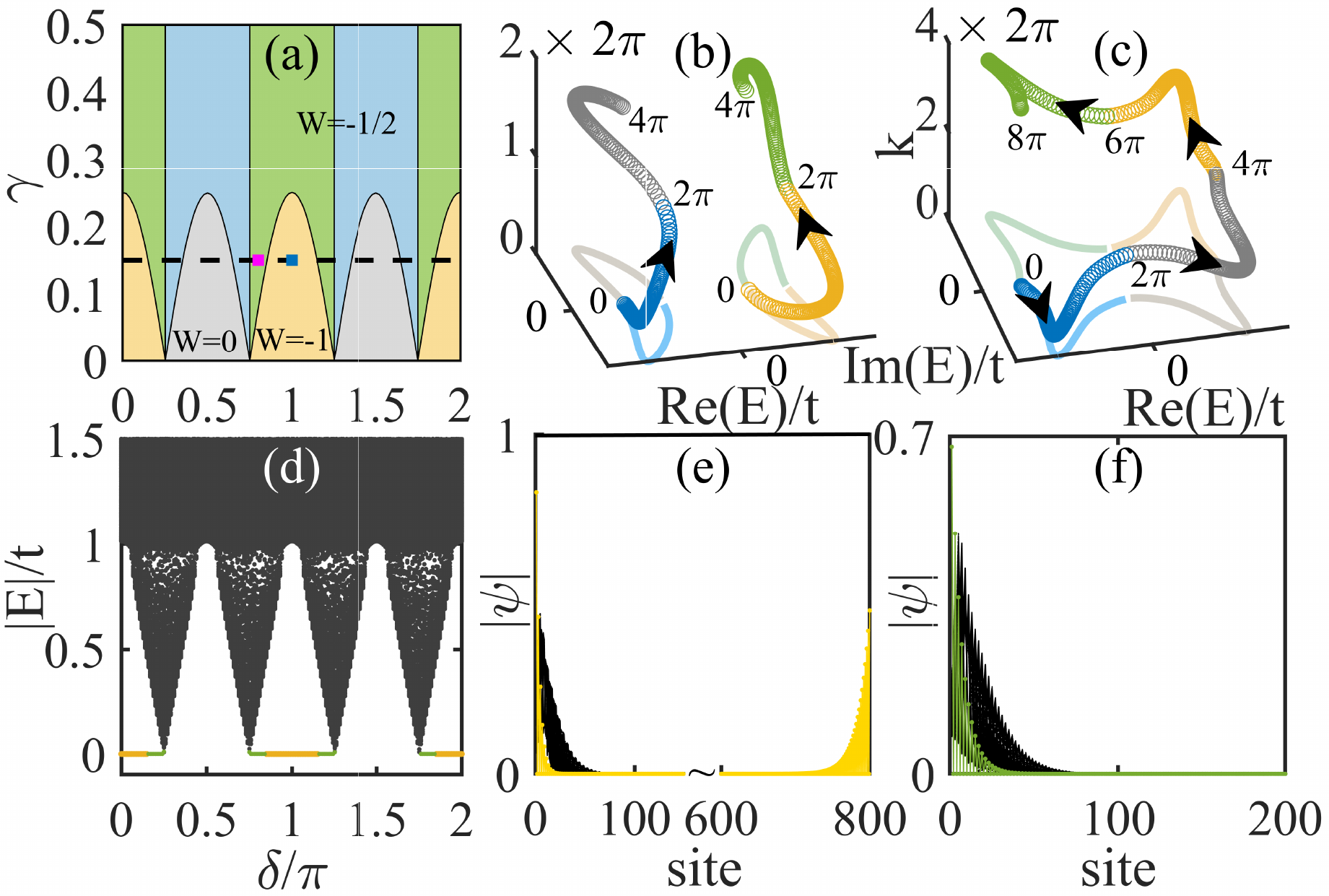}
  \caption{(Color online) (a) Phase diagram with respect to $\delta$ and $\gamma$, where the yellow, green, blue and
  gray areas represent the phases with $(W,N_e)=(-1,2),(-1/2,1),(-1/2,0),(0,0)$, respectively.
  In the region with $W=-1/2$, we find $W_E=1/4$. Complex energy spectra versus $k$ and their projection
  in the $k=0$ plane for (b) $\delta=\pi,\gamma=0.15$ and (c) $\delta=0.8\pi,\gamma=0.15$ corresponding to the blue and pink squares in (a), respectively. We also label the position of $k$ where $k=2m\pi$ with $m$ being an integer. (d) Absolute values of
  the eigenenergy versus $\delta$ under OBCs for $\gamma=0.15$ as indicated by the dashed black line in (a). The green and yellow lines at zero energy depict the one zero-energy eigenstate located only at the left edge
  and two zero-energy eigenstates located at both edges, respectively. (e-f) Amplitudes of wave functions
  with parameters indicated by the blue and pink squares in (a), respectively.
  The black lines denote the bulk states while the yellow and green lines the zero-energy edge states. Here, the lattice size $L=800$, $\lambda=1$, and $\alpha=1/4$. }
\label{fig1}
\end{figure}

Since the 1D system for a fixed $\delta$ respects the sublattice symmetry~\cite{SM,ChiuReview}, $H(k)$ can be transformed into
an off-diagonal block form~\cite{SM}:
$H(k)\rightarrow [0\text{  }h_1(k);h_2(k)\text{  }0]$, and the winding number for each block is defined as~\cite{Gong}
$
  w_{1,2}=\int_{0}^{2\pi} \frac{dk}{2\pi i} \partial_k \log \det h_{1,2}(k).
$
We can further define the winding number of the system as $W\equiv (w_1-w_2)/2$~\cite{Gong}.
In the Hermitian case, $h_2=h_1^\dagger$ leading to $w_1=-w_2$ and thus $W$ has to equal an integer.
However, the non-Hermitian term breaks this relation so that $W$ can be a half integer~\cite{Gong}.
This occurs in our system with $W=-1/2$ (see Fig.~\ref{fig1}(a)). For a system with two energy bands, such
as the SSH model, if $W=\pm 1/2$, we have $w_1=n$ and $w_2=n\pm1$ with $n$ being an integer and
thus $h_1\propto e^{in\theta_1(k)}$ and $h_2\propto e^{i(n\pm 1)\theta_2(k)}$, where $\theta_\nu(k)$ ($\nu=1,2$) changes continuously from $\theta_\nu(k_0)$ to $\theta_\nu(k_0)+2\pi$
as $k$ varies from $k_0$ to $k_0+2\pi$. Since the eigenenergy is $E_k=\pm\sqrt{-h_1(k)h_2(k)}
\propto e^{i n(\theta_1+\theta_2)/2}e^{i\theta_2/2}$,
implying that one ends up with the other energy $-E$ starting from one energy $E$ as $k$ varies
from $k_0$ to $k_0+2\pi$.

However, in our system, we find that when $W=-1/2$, all these four energy bands are connected (see Fig.~\ref{fig1}(c)),
implying that $E_k\propto e^{i\theta(k)/4}$, where $\theta(k)$ gains a $2\pi$ as $k$ continuously vary
from $k_0$ to $k_0+2\pi$, similar to $\theta_1$.
This shows that the energy encloses a branch point of order 3 so that a state needs to travel across the
Brillouin zone four times to return. To discriminate with the case involving a branch point of order 1,
we define a winding number for a separable energy band $E_n$ as
\begin{equation}
W_{E_n}=\frac{1}{2m\pi}\int_0^{2m\pi} dk \partial_k \text{arg}[E_n(k)-E_B]
\end{equation}
with respect to a base energy $E_B$, where $E_n(k)=E_n(k+2m\pi)$ with $m$ being the smallest integer so that
this relation is satisfied. For the non-Hermitian SSH model involving a branch point of order 1,
$W_E=1/2$. However, in our system when $W=-1/2$, we find $W_E=1/4$. Further calculation of the Berry phase
$C_1=\int_0^{2m\pi} dk \langle \Psi_n^L(k) |\partial_{k}\Psi_n^R(k)\rangle/\langle \Psi_n^L(k)|\Psi_n^R(k)\rangle $
as $k$ varies from $0$ to $8\pi$ shows that $C_1 \text{mod} 2\pi=\pi$~\cite{Xu}. Interestingly,
in the region with $W=-1$ and $W=0$ and $\gamma\neq 0$, we see that each separable energy bands encloses
a branch point of order 1, yielding $W_{E_1}=W_{E_2}=1/2$ with respect to the corresponding base energies
inside the rings (see Fig.~\ref{fig1}(b)).

Under OBCs, we show that when $W=-1$, there appear two zero-energy edge states
located at two edges as shown in Fig.~\ref{fig1}(e). While this is similar to the Hermitian case, different
properties arise that all bulk states are localized at the left edge when $\gamma>0$ due to the non-Hermitian
skin effects arising from the asymmetric hopping.
More interestingly, when $W=-1/2$, we find a region (green) where there is only one zero-energy
eigenstate located only at the left edge (see Fig.~\ref{fig1} (f)). In fact, the system exhibits a
zero-energy exceptional point with a zero-energy eigenstate and a zero-energy generalized eigenstate,
where the Hamiltonian becomes defective. This is also reflected
by the change of $N-\textrm{rank}(H)$ with $N=L$ from $2$ to $1$ as $\gamma$ varies from the yellow region
to the green one.

In addition, we see that there exists a region (blue) where despite $W=-1/2$ and $W_E=1/4$, no zero-energy modes emerge, implying the
breakdown of the bulk-edge correspondence (here bulk correspond to the wave functions in momentum space).
This arises from the dramatic change of the bulk wave functions
as boundary conditions are changed~\cite{Xiong}. To restore the bulk-edge correspondence, we need to use the wave
functions under OBCs to calculate the winding number. Let us follow the method
proposed in Ref.~\cite{Yao1} and calculate the $\text{det}(H(\beta)-EI)=0$ where $I$ is an identity
matrix and $H(\beta)=H(e^{ik}\rightarrow \beta)$ with the Hamiltonian $H$ in momentum space~\cite{SM}. This
equation gives us two solutions $\beta_1$ and $\beta_2$ for each $E$ satisfying
$\beta_1\beta_2=\prod_{j=1}^q t_j^\prime/\prod_{j=1}^q t_j$. For the bulk states, $|\beta_1|=|\beta_2|=r$.
This leads to a generalized Bloch Hamiltonian $\tilde{H}=H(e^{ik}\rightarrow r e^{ik})$ so that calculation of
the winding number of this Hamiltonian gives us the phase boundary for the existence of zero-energy modes.
In fact, this new Hamiltonian gives the same winding number as the case without $\gamma$.
For $\gamma=0$, we do not find any skin effects so that the bulk-edge correspondence is preserved,
implying that the gap closing of the energy bands in momentum space with respect to $\delta$ signals
whether zero-energy edge modes appear. We find that the gap closes when $\delta=(2j+1)\pi/4$ with $j=0,1,2,3$
and zero-energy edge modes emerge when $|\sin \delta| < |\cos \delta|$ as shown in Fig.~\ref{fig1}.

\begin{figure}[t]
  \includegraphics[width=3.3in]{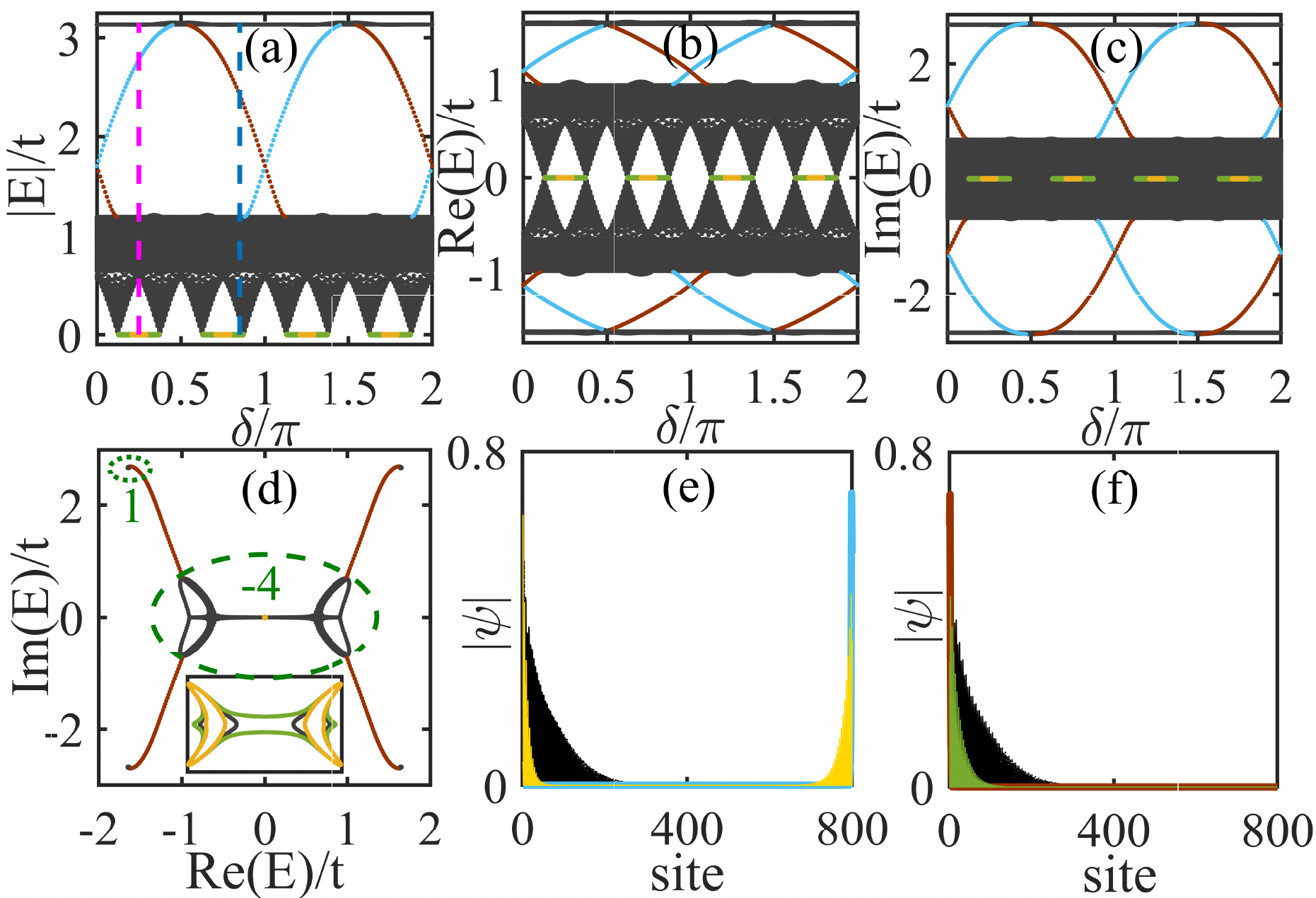}
  \caption{(Color online)
  (a) Absolute values, (b) real parts and (c) imaginary parts of complex eigenenergy versus $\delta$
  under OBCs. The green and yellow lines
  denote the one and two zero-energy edge states, respectively. The cyan and
  brown lines represent the nonzero energy edge states located at the left and right edges, respectively.
  (d) Energy spectra in the complex energy plane for all $\delta$ under OBCs. The brown lines
  represent the nonzero energy edge states located at two edges. The inset presents energy
  spectra of the Hamiltonian in momentum space for three different $\delta$ values: $0.05\pi$ (black), $0.12\pi$ (green) and $0.25\pi$ (yellow). The green numbers show the generalized Bott index for the states inside the corresponding circles.
  (e-f) Amplitudes of eigenstates for $\delta=\pi/4$ and $0.85\pi$ corresponding to the dashed pink and blue lines in (a), respectively. The black lines depict the bulk states. In (e), the yellow (cyan) lines
  correspond to two zero-energy edge states (nonzero energy edge states). In (d), the green (brown)
  line denotes the zero-energy (nonzero energy) edge mode. Here,
  $L=800$, $\lambda=2$, $\alpha=1/8$ and $\gamma=0.05$.}
  \label{fig2}
\end{figure}

In the general case, when $q=4m+2$ ($m$ being an integer and $4m$ being prime to $p$)
instead of a multiple of $4$, we find that
the energy spectrum of $\tilde{H}$ is gapless with the presence of zero-energy eigenstates for every $\delta \in [0,2\pi]$ \cite{SM}, indicating the absence of the zero-energy edge modes in such cases~\cite{footnote1}.
When $q=4m$, we have proved that the spectrum of $\tilde{H}$ is gapless when
$\delta = (2n+1) \pi/(4m)$ with $n = 0,1,\cdots,4m-1$ (suppose $m>0$)~\cite{SM}. When $\gamma=0$, it
is proved that a gapped region can appear,
showing that the topologically nontrivial zero-energy modes can exist~\cite{SM}.
In other cases, for instance, when $q$ is an odd number, there is no sublattice symmetry and thus the zero-energy states cannot be protected.

\emph{Coexistence of distinct types of edge modes in the commensurate AAH model}---
The non-Hermitian AAH model also exhibits a peculiar feature that the single zero-energy mode can coexist with other
topological nonzero energy edge modes (see Fig.~\ref{fig2}).
Specifically, Fig.~\ref{fig2} shows that
there exist two regions with one and two zero-energy edge states, respectively. In the former region,
$W_E=1/4$ for the eigenstates in momentum space.
Besides the zero-energy states,
we find other edge modes inside a gap, reminiscent
of chiral edge modes in a Chern insulator if $\delta$ is viewed as a quasimomentum. In the complex energy plane
for all $\delta$,
we observe five separable bands with four lines connecting four bands outside to one at the
center; these four lines
correspond to the edge states.

\begin{figure}[t]
  \includegraphics[width=3.4in]{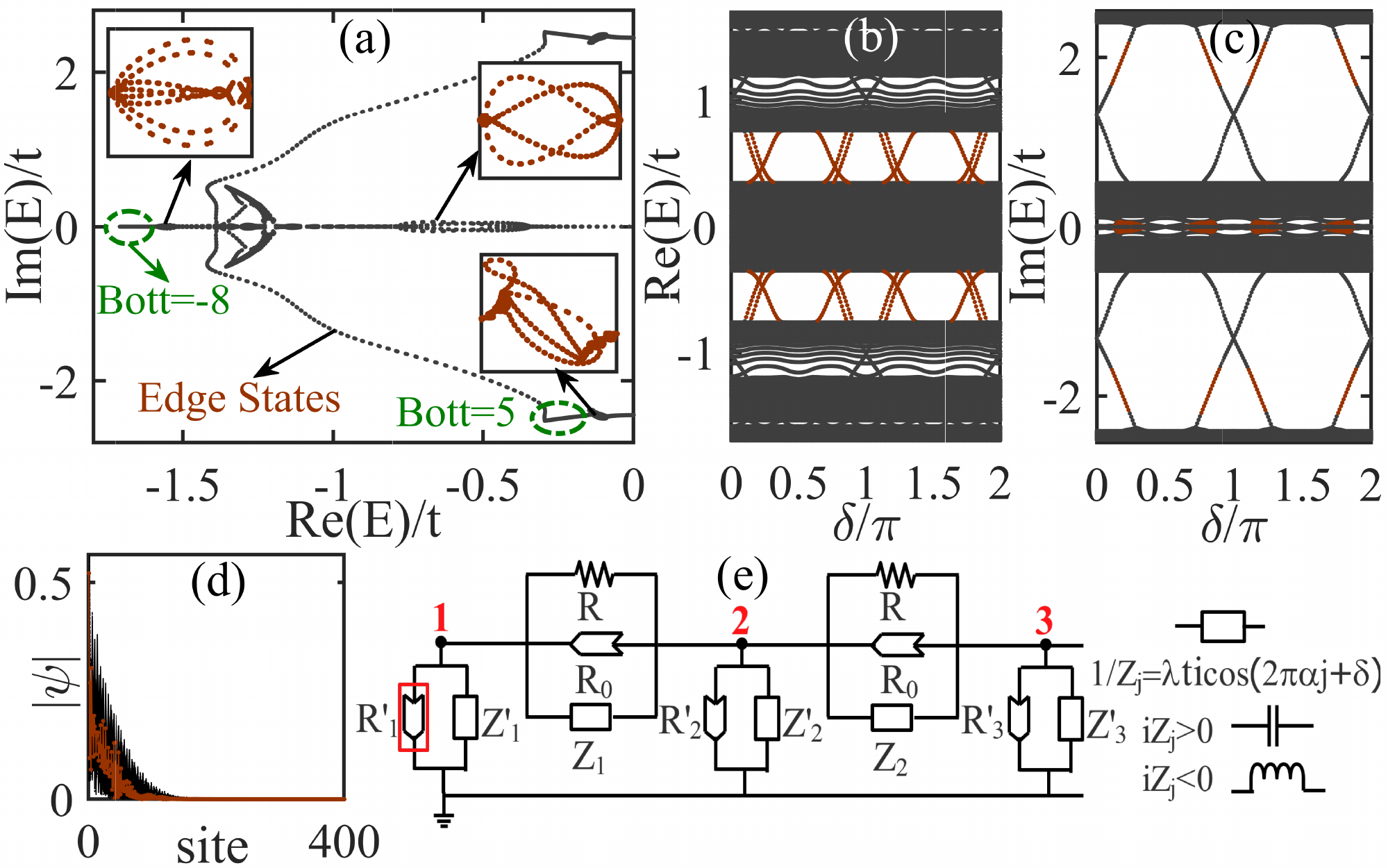}
  \caption{(Color online) (a) Energy spectra in the complex energy plane for the non-Hermitian quasicrystal.
  The insets plot the zoomed-in figures
  of the edge states inside the corresponding energy gaps. Real (b) and imaginary (c) parts of the complex energies.
  (d) Amplitudes of the bulk states (black lines) and edge states (brown lines) for $\delta=\pi/4$.
  For the figures above, $L=610$, $\lambda=2$, $\alpha=(\sqrt{5}-1)/2$, and $\gamma=0.1$.
  (e) Schematics of the electric circuit configuration for realizing the non-Hermitian AAH model.
  The electric element in the red box denotes a negative impedance converter with current inversion
  (INIC)~\cite{WKChen,Thomale3}. $R_j^\prime$ and $R_0$ denote the resistance of the INICs, the sign of which
  depends on the orientation of the INIC. $R$ represents the
  impedance of a resistor, and $Z_j$ and $Z_j^\prime$ that of a capacitor or a inductor. The rectangle electric element represents a capacitor or inductor determined by the value
  of the hopping as shown in the right figures.
}
  \label{fig3}
\end{figure}

When $\gamma > 0$, we find that all bulk states are localized at the left edge, implying that we need to
use the wave functions obtained under OBCs to characterize the ``chiral" edge states.
Here, we generalize the Bott index~\cite{Hastings} by defining it as
\begin{equation}
\mathrm{Bott}=\frac{1}{2\pi}\text{Im}\text{Tr}\log{U_y U_x U_y^\dagger U_x^\dagger},
\end{equation}
where $U_{a,mn}=\langle \Psi_m^L| e^{2\pi i\hat{a}/L_a}| \Psi_n^R\rangle$ with $a=x,y$
and $\hat{x}$ and $\hat{y}$ denoting the position operators along $x$ or $y$,
respectively, and $L_a$ labelling the size of the system along
the corresponding direction. Additionally, $| \Psi_n^R\rangle$ and $| \Psi_m^L\rangle$ represent
the right and left eigenvectors in a separable band, respectively. To calculate the Bott index, we map our system into a 2D Harper model~\cite{footnoteHaper}.
Transforming this Hamiltonian along $y$ to the form in momentum space exactly gives us
the Hamiltonian (\ref{Hamiltonian}) if $k_y$ is replaced with $\delta$. This allows us to calculate the Bott index
of $H_{2D}$ under periodic (open) boundary conditions along $y$ ($x$) to obtain
the topological invariant of our system.
We find that for the five separable bands, the Bott index is $-4$ for the central band and $1$ for each of the other four bands at the corners of the complex energy plane (see Fig.~\ref{fig2}(d)), demonstrating that the edge states
are topologically protected. We note that, with OBCs,
while there appear edge states connecting the separable bands, their presence does not
affect our results.

\emph{Non-Hermitian quasicrystals}.--- When $\alpha$ is irrational, the non-Hermitian AAH model becomes quasiperiodic, and the imaginary modulation is incommensurate with lattice spacings, leading to a quasicrystal.
Similar to the commensurate scenario, in Fig.~\ref{fig3}(a), we illustrate the energy spectrum in the complex energy plane for all $\delta$ from $0$ to $2\pi$ for $\alpha=(\sqrt{5}-1)/2$. The figure exhibits rich band structures.
Apparently, there are two separable bands
with the imaginary value around $\pm 2.5$. They are connected by the edge states (denoted by the brown lines) to the band
with real energies. For each of these two bands, there is also a mini-gap within which four edge states reside
(see the insets). For the band with real energies, there exist a gap and a mini-gap with four and eight edge states inside, respectively. These edge states can also be observed when the energy spectrum is projected to the real or imaginary part.

Remarkably, we further find that all bulk states are localized at the left edge when $\gamma\neq 0$ and all edge states
are located at the left edge when $\gamma$ is sufficiently large, in start contrast to the Hermitian case, as shown in
Fig.~\ref{fig3}(d). We note that the localization of the bulk states is caused by the non-Hermitian skin effect instead
of the Anderson localization. Since the bulk states are sensitive to the boundary conditions, we cannot
apply a twisted boundary condition to calculate the Chern number as the Hermitian case~\cite{Zilberberg}. Instead,
we can still calculate the Bott index using the wave functions obtained under OBCs. We find that the Bott index for each separable band equals the number of edge states inside the gap. For instance, the Bott index of the band with the imaginary value around $\pm 2.5$ and the real value smaller than $-0.1$ is 5, protecting
five edge states coming from this band (see Fig.~\ref{fig3}(a)).

\emph{Experimental realizations}.--- Recently, electric circuits have been demonstrated to be a powerful platform to simulate topological phenomena, such as the SSH model~\cite{Thomale1}, Weyl semimetals~\cite{Lu2019} and higher order topological insulators~\cite{Thomale2}.
Here, we propose an experimental scheme with electric circuits for realizing
the non-Hermitian AAH models (see Fig.~\ref{fig3}(e)). We can achieve the required the Laplacian so that
$J=-(EI+H)$ by
choosing appropriate impedances for these electric devices~\cite{footnote2}.
The edge states can be observed by measuring the two-point impedance between two nodes which diverges
as $E+E_n=0$ as we vary $E$.

In summary, we have demonstrated that for the commensurate non-Hermitian off-diagonal AAH model, there
exist zero-energy states localized at the edges. In contrast to the Hermitian case, the edge
states can be localized only at one edge. Such a topological phase corresponds to the emergence of a quarter
winding number defined by eigenenergy in momentum space. We further find that the zero-energy edge modes can coexist with nonzero energy edge modes protected by the generalized Bott index.
For the incommensurate case, topological non-Hermitian quasicrystals with edge modes are predicted. These
edge modes can be characterized by the generalized Bott index. Our findings pave the way for further studies
on topological properties in non-Hermitian Aubry-Andr\'e-Harper models.

\begin{acknowledgments}
\textit{Acknowledgement:} We thank S.-T. Wang for helpful discussions. This work was supported by the start-up program of Tsinghua University and the National Thousand-Young-Talents Program.
\end{acknowledgments}

\begin{widetext}

\section{Supplementary Materials}

\setcounter{equation}{0} \setcounter{figure}{0} \setcounter{table}{0} %
\renewcommand{\theequation}{S\arabic{equation}} \renewcommand{\thefigure}{S%
\arabic{figure}} \renewcommand{\bibnumfmt}[1]{[S#1]} \renewcommand{%
\citenumfont}[1]{S#1}
In the supplementary material, we will derive the generalized Bloch Hamiltonian for bulk states with open boundary conditions,
prove the condition under which the system is gapless around zero energy and discuss the condition for the
presence of the winding number of the Hamiltonian in detail.

\subsection{A. Generalized Bloch Hamiltonian}
We follow the method proposed in~\cite{WangZhong1,GeneralBloch} to obtain the generalized Hamiltonian.
For the commensurate case with $\alpha=p/q$ with $q$ and $p$ being mutually prime positive integers,
we can write the eigenstate of $H$ as $|\Psi_j^R \rangle=(\psi_{1,1}^j,\cdots,\psi_{1,q}^j,\cdots,\psi_{N,1}^j,\cdots,\psi_{N,q}^j)^T$
where $N$ is the number of unit cells. For open boundary conditions, let us suppose that $\psi_{n,\mu}^j=(\beta)^{n} \phi_{\mu}^{(j)}$.
The equation $H |\Psi_j^R \rangle=E |  \Psi_j^R \rangle$ leads to
\begin{equation}
H(\beta)
\left(
  \begin{array}{c}
    \phi^{(j)}_{1} \\
    \phi^{(j)}_{2} \\
    \phi^{(j)}_{3} \\
    \vdots \\
    \phi^{(j)}_{q}
  \end{array}
\right)
=
E
\left(
  \begin{array}{c}
    \phi^{(j)}_{1} \\
    \phi^{(j)}_{2} \\
    \phi^{(j)}_{3} \\
    \vdots \\
    \phi^{(j)}_{q}
  \end{array}
\right),
\end{equation}
where
\begin{equation}
H(\beta)=
\left(
  \begin{array}{ccccc}
    0 & t_1 & 0 & \cdots & t_q' \beta^{-1} \\
    t_1' & 0 & t_2 & \cdots & 0 \\
    0 & t_2' & 0 & \cdots & 0 \\
    \vdots & \vdots & \vdots & \ddots & \vdots \\
    t_q \beta & 0 & \cdots & \cdots & 0
  \end{array}
\right).
\end{equation}
To have a nontrivial solution, we require that $\det(EI-H(\beta))=0$. This gives us a quadratic equation
for $\beta(E)$ with two solutions $\beta_{1,2}$ satisfying
\begin{equation}
\beta_1\beta_2=\frac{t_1' t_2' \cdots t_q'}{t_1 t_2 \cdots t_q}.
\end{equation}
For bulk states, we require $|\beta_1|=|\beta_2|$~\cite{GeneralBloch} in order to obtain a continuum band. This gives us
\begin{equation}
|\beta_{1,2}|=r=\sqrt{\left|\frac{t_1' t_2' \cdots t_q'}{t_1 t_2 \cdots t_q}\right|}.
\end{equation}
The generalized Bloch Hamiltonian can be obtained by replacing the $e^{ik}$ with
$\beta=r e^{ik}$ in the Bloch Hamiltonian $H(k)$, that is,
\begin{equation}\label{H_beta}
\tilde{H}(k)=H(k\rightarrow k-i\log r)=
\left(
  \begin{array}{ccccc}
    0 & t_1 & 0 & \cdots & t_q' r^{-1} e^{-ik} \\
    t_1' & 0 & t_2 & \cdots & 0 \\
    0 & t_2' & 0 & \cdots & 0 \\
    \vdots & \vdots & \vdots & \ddots & \vdots \\
    t_q r e^{ik} & 0 & \cdots & \cdots & 0
  \end{array}
\right).
\end{equation}

In the following, we will use the generalized Bloch Hamiltonian $\tilde{H}(k)$ to determine the condition for
the existence of zero-energy edge states with open boundary conditions
and calculate the corresponding winding number.

\subsection{B. Condition for the existence of topological zero-energy modes}
In this section, we will show that for $q=4m+2$ with $m$ being an integer, the spectrum is
gapless around zero energy under periodic boundaries for all $\delta \in [0,2\pi)$,
while for $q=4m$,
the spectrum is gapless for $\delta = (2n+1) \pi/(4m)$ with $n = 0,1,\cdots,4m-1$ (suppose $m>0$).

When $q$ is an even number, the system has the sublattice symmetry $S_1^{-1} \tilde{H}(k)S_1=-\tilde{H}(k)$
with $S_1=\text{diag}(1,-1,1,-1,\cdots,1,-1)$ being a $q\times q$ diagonal matrix and we thus can transform $\tilde{H}(k)$ into the off-diagonal form: $\tilde{H}(k)=\left( \begin{array}{cc} 0 & \tilde{h}_1(k) \\ \tilde{h}_2(k) &0 \end{array} \right)$ with
\begin{equation}\label{hk}
  \tilde{h}_1(k)=\left(
  \begin{array}{ccccc}
    t_1 & 0 & 0 & \cdots & t_q' r^{-1} e^{-ik} \\
    t_2' & t_3 & 0 & \cdots & 0 \\
    0 & t_4' & t_5 & \cdots & 0 \\
    \vdots & \vdots & \vdots & \ddots & \vdots \\
    0 & 0 & 0 & \cdots & t_{q-1}
  \end{array}
  \right),
  \tilde{h}_2(k)=\left(
  \begin{array}{ccccc}
    t_1' & t_2 & 0 & \cdots & 0 \\
    0 & t_3' & t_4 & \cdots & 0 \\
    0 & 0 & t_5' & \cdots & 0 \\
    \vdots & \vdots & \vdots & \ddots & \vdots \\
    t_q r e^{ik} & 0 & 0 & \cdots & t_{q-1}'
  \end{array}
  \right),
\end{equation}
which are $\frac{q}{2} \times \frac{q}{2}$ matrices. Here $k\in[0,2\pi]$.
When $r=1$, $\tilde{h}_1=h_1$ and $\tilde{h}_2=h_2$.
If the determinant of $\tilde{H}(k)$ equals zero, i.e., $\det(\tilde{H}(k))=(-1)^{q^2/4} \det(\tilde{h}_1(k))\det(\tilde{h}_2(k))=0$, there will be eigenstates with zero eigenenergy.

In the case with $q=4m+2$, we have
\begin{align}
  \det(\tilde{h}_1(k))&=t_1 t_3 \cdots t_{4m+1} + t_2' t_4' \cdots t_{4m+2}' r^{-1} e^{-ik} \nonumber \\
  &=t_o + t_e' r^{-1} e^{-ik},
\end{align}
where $t_o=t_1 t_3 \cdots t_{4m+1}$ and $t_e'=t_2' t_4' \cdots t_{4m+2}'$.
This expression can be simplified to
\begin{equation}
\det(\tilde{h}_1(k))=t_e' r^{-1}(\Omega + e^{-ik}),
\end{equation}
where
\begin{eqnarray}
  \Omega &=& \frac{t_o r}{t_e'} = \frac{t_1 t_3 \cdots t_{4m+1}}{t_2' t_4' \cdots t_{4m+2}'} \sqrt{\left|\frac{t_1' t_2' \cdots t_{4m+2}'}{t_1 t_2 \cdots t_{4m+2}}\right|}, \\
  |\Omega| &=& \sqrt{\left|\frac{t_1 t_3 \cdots t_{4m+1}}{t_2 t_4 \cdots t_{4m+2}}\right| \left|\frac{t_1' t_3' \cdots t_{4m+1}'}{t_2' t_4' \cdots t_{4m+2}'}\right|}.
\end{eqnarray}
We have
\begin{align}
\left|\frac{t_1 t_3 \cdots t_{4m+1}}{t_2 t_4 \cdots t_{4m+2}}\right|
&=\left|\frac{\prod_{j=1}^{2m+1}[1+\gamma + i \lambda \cos(p \pi \frac{2j-1}{2m+1}+\delta)]}{\prod_{j=1}^{2m+1}[1+\gamma + i \lambda \cos(p \pi \frac{2j}{2m+1}+\delta)]}\right| \nonumber \\
&=\left|\frac{\prod_{j=1}^{2m+1}[1+\gamma - i \lambda \cos(p \pi \frac{2(j+m)}{2m+1}+\delta)]}{\prod_{j=1}^{2m+1}[1+\gamma + i \lambda \cos(p \pi \frac{2j}{2m+1}+\delta)]}\right| \nonumber \\
&=1.
\end{align}
Similarly, we obtain
\begin{equation}
\left|\frac{t_1' t_3' \cdots t_{4m+1}'}{t_2' t_4' \cdots t_{4m+2}'}\right|=1.
\end{equation}
Thus, we have $|\Omega|=1$, indicating that for each $\delta$, we can always find a $k_1\in[0,2\pi]$ such that $\det(\tilde{h}_1(k_1))=0$.
Therefore, we conclude that when $q=4m+2$, the system is gapless in the energy
spectrum for all $\delta$, implying the absence of the zero-energy edge states.

Before we consider the case for $q=4m$, we first present a lemma.
\begin{lemma}
Let $f_j,g_j\in \mathbb{R}$ with $j=1,\cdots,m$ and $m$ being an integer larger than zero. If $\prod_{j=1}^{m}(1+\lambda f_j)=\prod_{j=1}^{m}(1+\lambda g_j)$ for all $\lambda\in \mathbb{R}$, then for each $f_j$ with $1\leq j\leq m$, there exists a $g_r$ with $1\leq r\leq m$ such that $f_j=g_r$; conversely, for each $g_j$ with $1\leq j\leq m$,
there exists a $f_r$ with $1\leq r\leq m$ such that $g_j=f_r$.
\end{lemma}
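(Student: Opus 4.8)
The plan is to promote the stated identity of real-valued functions to an identity of polynomials and then recover the multiset $\{f_1,\dots,f_m\}$ directly from the roots of that polynomial. First I would set $P(\lambda):=\prod_{j=1}^{m}(1+\lambda f_j)$ and $Q(\lambda):=\prod_{j=1}^{m}(1+\lambda g_j)$ and observe that these are polynomials in $\lambda$ of degree at most $m$ which, by hypothesis, agree at every $\lambda\in\mathbb{R}$. Since a nonzero polynomial of degree at most $m$ has at most $m$ real roots, $P-Q$ must be the zero polynomial, so $P\equiv Q$ as formal polynomials.

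Next I would separate the vanishing from the non-vanishing parameters. The factor $1+\lambda f_j$ is the constant $1$ precisely when $f_j=0$, and otherwise is a degree-one polynomial with unique root $\lambda=-1/f_j$. Hence $\deg P=\#\{j:f_j\neq 0\}$, and, counted with multiplicity, the roots of $P$ are exactly $\{-1/f_j:f_j\neq 0\}$; the same holds for $Q$ with the $g_j$. From $P\equiv Q$ I get $\#\{j:f_j\neq 0\}=\#\{j:g_j\neq 0\}$, so the number of $f_j$ equal to zero matches the number of $g_j$ equal to zero, and the multiset of roots $\{-1/f_j:f_j\neq 0\}$ equals $\{-1/g_j:g_j\neq 0\}$. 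Applying the bijection $x\mapsto -1/x$ to these roots converts this into $\{f_j:f_j\neq 0\}=\{g_j:g_j\neq 0\}$ as multisets; adjoining the equal numbers of zeros yields $\{f_1,\dots,f_m\}=\{g_1,\dots,g_m\}$ as multisets, which in particular gives the element-wise statement claimed in the lemma (in both directions).

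The argument is largely routine; the only points needing care are (i) the passage from pointwise equality on $\mathbb{R}$ to formal polynomial equality, handled by the standard bound on the number of roots, and (ii) the bookkeeping of any $f_j$ or $g_j$ equal to zero, which contribute trivial factors and so must be accounted for by comparing degrees rather than roots (in the extreme case $P\equiv Q\equiv 1$ forces all parameters to vanish). I expect the degree/zero bookkeeping to be the main, mild, obstacle. As an alternative I would mention expanding $\log P(\lambda)=\sum_j\log(1+\lambda f_j)$ as a power series near $\lambda=0$, which shows that all power sums $\sum_j f_j^k=\sum_j g_j^k$ for $k=1,\dots,m$, and then invoking Newton's identities to equate the elementary symmetric functions of the two families; this also delivers the multiset equality but requires $m$ fixed and a small convergence argument, so the root-counting proof above is the cleaner route.
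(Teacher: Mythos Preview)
Your argument is correct and in fact yields the slightly stronger conclusion that $\{f_1,\dots,f_m\}$ and $\{g_1,\dots,g_m\}$ coincide as multisets, not merely as sets. The paper, however, proceeds differently: it expands both products and equates the coefficients of $\lambda,\lambda^2,\dots,\lambda^m$, thereby identifying the elementary symmetric functions $A_1,\dots,A_m$ of the $f_j$ with those of the $g_j$. It then fixes an index $l$ and, by repeatedly multiplying the relation $\sum_j f_j=A_1$ by $f_l$ and substituting the successive identities for $A_2,A_3,\dots$, builds up the equation $f_l^{m}-A_1 f_l^{m-1}+\cdots+(-1)^m A_m=0$, which factors as $\prod_{j=1}^m(f_l-g_j)=0$; symmetry gives the converse. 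Your route is shorter and more conceptual: once $P\equiv Q$ as polynomials, unique factorisation over $\mathbb{R}[\lambda]$ (together with the degree bookkeeping you flag for the $f_j=0$ case) immediately delivers the result without any symmetric-function manipulation. The paper's approach, by contrast, is essentially a bare-hands rederivation of the fact that the elementary symmetric functions determine the roots; it is longer but makes the symmetric-function structure explicit, which is morally the same content as the Newton-identity alternative you sketch at the end.
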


\begin{proof}
Since the equation is satisfied for all $\lambda$, we have
\begin{eqnarray}
\sum_{j=1}^m f_j&=&\sum_{j=1}^m g_j=A_1, \label{Lm1} \\
\sum_{1\leq j_1<j_2\leq m}f_{j_1}f_{j_2}&=&\sum_{1\leq j_1<j_2\leq m}g_{j_1}g_{j_2}=A_2, \label{Lm2}\\
\sum_{1\leq j_1<j_2< j_3\leq m}f_{j_1}f_{j_2}f_{j_3}&=&\sum_{1\leq j_1<j_2<j_3\leq m}g_{j_1}g_{j_2}g_{j_3}=A_3, \label{Lm3}\\
&\cdots& \\
\prod_{j=1}^m f_j&=&\prod_{j=1}^m g_j=A_m. \label{Lm4}
\end{eqnarray}

Let $l$ be an integer such that $1\leq l\leq m$. With aids of Eq.~(\ref{Lm2}), multiplying Eq.~(\ref{Lm1}) by $f_l$ gives us
\begin{equation}
f_l^2-A_1 f_l+A_2=\sum_{1\leq j_1<j_2\leq m}\nolimits^\prime f_{j_1} f_{j_2}, \label{Lm5}
\end{equation}
where $\sum\nolimits^\prime$ indicates that its subscripts cannot be equal to $l$. We further multiply Eq.~(\ref{Lm5}) by
$f_l$ and, with aids of Eq.~(\ref{Lm3}), we obtain
\begin{equation}
f_l^3-A_1 f_l^2+A_2 f_l-A_3=-\sum\nolimits_{j_1<j_2<j_3}^\prime f_{j_1} f_{j_2} f_{j_3}.
\end{equation}
We repeat this process until we get
\begin{equation}
f_l^m-A_1 f_l^{n-1}+A_2 f_l^{n-2}-A_3 f_l^{n-3}+\cdots +(-1)^{m-1} A_{m-1} f_l+(-1)^m A_m=0.
\end{equation}
Since the left-hand expression can be written as $\prod_{j=1}^m (f_l-g_j)$, we have
\begin{equation}
\prod_{j=1}^m (f_l-g_j)=0.
\end{equation}
For all $l$, this equation holds, implying that, for each $f_l$, there exists a $g_r$ with $1\leq r \leq m$ such that
$f_l=g_r$. Conversely, similar derivation gives us
\begin{equation}
\prod_{j=1}^m (g_l-f_j)=0,
\end{equation}
implying that, for each $g_l$, there exists a $f_r$ with $1\leq r \leq m$ such that
$g_l=f_r$.
\end{proof}

When $q=4m$, we have
\begin{align}
  \det(\tilde{h}_1(k))&=t_1 t_3 \cdots t_{4m-1} - t_2' t_4' \cdots t_{4m}' r^{-1} e^{-ik} \nonumber \\
  &=t_o - t_e' r^{-1} e^{-ik} \nonumber \\
  &=t_e' r^{-1}(\Omega - e^{-ik}),
\end{align}
where $t_o=t_1 t_3 \cdots t_{4m-1}$, $t_e'=t_2' t_4' \cdots t_{4m}'$ and
\begin{eqnarray}
  \Omega &=& \frac{t_o r}{t_e'} = \frac{t_1 t_3 \cdots t_{4m-1}}{t_2' t_4' \cdots t_{4m}'} \sqrt{\left|\frac{t_1' t_2' \cdots t_{4m}'}{t_1 t_2 \cdots t_{4m}}\right|}, \\
  |\Omega| &=& \sqrt{T T'} = \sqrt{\left|\frac{t_1 t_3 \cdots t_{4m-1}}{t_2 t_4 \cdots t_{4m}}\right| \left|\frac{t_1' t_3' \cdots t_{4m-1}'}{t_2' t_4' \cdots t_{4m}'}\right|},
\end{eqnarray}
where
\begin{align}\label{omega_4m_1}
T\equiv\left|\frac{t_1 t_3 \cdots t_{4m-1}}{t_2 t_4 \cdots t_{4m}}\right|
&=\left|\frac{\prod_{j=1}^{2m}[1+\gamma + i \lambda \cos(p \pi \frac{2j-1}{2m}+\delta)]}{\prod_{j=1}^{2m}[1+\gamma + i \lambda \cos(p \pi \frac{2j}{2m}+\delta)]}\right| \nonumber \\
&=\frac{\prod_{j=1}^{m}[(1+\gamma)^2 + \lambda^2 \cos^2(p \pi \frac{2j-1}{2m}+\delta)]}{\prod_{j=1}^{m}[(1+\gamma)^2 + \lambda^2 \cos^2(p \pi \frac{2j}{2m}+\delta)]},
\end{align}
and
\begin{equation}
T'\equiv\left|\frac{t_1' t_3' \cdots t_{4m-1}'}{t_2' t_4' \cdots t_{4m}'}\right|
=\frac{\prod_{j=1}^{m}[(1-\gamma)^2 + \lambda^2 \cos^2(p \pi \frac{2j-1}{2m}+\delta)]}{\prod_{j=1}^{m}[(1-\gamma)^2 + \lambda^2 \cos^2(p \pi \frac{2j}{2m}+\delta)]}.
\end{equation}

When $\gamma=0$, we have
\begin{align}
|\Omega|=\frac{\prod_{j=1}^{m}[1 + \lambda^2 \cos^2(p \pi \frac{2j-1}{2m}+\delta)]}{\prod_{j=1}^{m}[1 + \lambda^2 \cos^2(p \pi \frac{2j}{2m}+\delta)]}.
\end{align}
Based on the lemma, if $|\Omega|=1$ for all $\lambda$,
we have
\begin{equation}
  \cos (p \pi \frac{2j_1-1}{m}+2 \delta) = \cos (p \pi \frac{2j_2}{m}+2 \delta),
\end{equation}
where $1\leq j_1,j_2\leq m$. This equation gives two types of possible solutions. For the first one,
\begin{equation}
(2j_1-1)p=2pj_2+2nm,
\end{equation}
with $n$ being an integer, which does not hold as odd numbers cannot be equal to even ones.
For the second one, we have
\begin{equation}
  p \pi \frac{2j_1-1}{m} + p \pi \frac{2j_2}{m} + 4 \delta = 2n \pi.
\end{equation}
Solving this equation shows that the gap of the energy spectrum closes when
\begin{equation}
  \delta = [ \frac{n}{2} - \frac{p}{4m} (2(j_1+j_2)-1) ] \pi,
\end{equation}
which is equivalent to
\begin{equation}
\delta = (2n+1) \pi/(4m)
\label{DeltaRelation}
\end{equation}
with $n = 0,1,\cdots,4m-1$. This tells us that the energy gap closes for all $\lambda$ when $\delta$ takes the above values.
When $\delta$ takes other values, there exist $\lambda$ so that the system is gapped, implying that the zero-energy edge
states can exist there.

When $\gamma \neq 0$, while we cannot prove that these $\delta$ in Eq.~\ref{DeltaRelation} are all the solutions
to $|\Omega|=1$ for all $\lambda$, we can verify that when $\delta$ take these values,
\begin{equation}
T=T'=1,
\end{equation}
yielding $|\Omega|=1$ and thus the energy gap closes at some $k$.

For each term $[(1+\gamma)^2 + \lambda^2 \cos^2(p \pi \frac{2j_1-1}{2m}+\delta)]$ in the numerator of $T$,
we can find a corresponding term $[(1+\gamma)^2 + \lambda^2 \cos^2(p \pi \frac{2j_2}{2m}+\delta)]$ in the denominator of $T$ to satisfy
\begin{equation}
  (1+\gamma)^2 + \lambda^2 \cos^2(p \pi \frac{2j_1-1}{2m}+\delta)=(1+\gamma)^2 + \lambda^2 \cos^2(p \pi \frac{2j_2}{2m}+\delta),
\end{equation}
if the two indices $j_1$ and $j_2$ satisfy
\begin{equation}
p \pi \frac{2j_1-1}{m} + p \pi \frac{2j_2}{m} + 4 \delta = 2n' \pi,
\end{equation}
with $n'$ being an integer.
This is true for $\delta = [ \frac{n'}{2} - \frac{p}{4m} (2(j_1+j_2)-1) ] \pi$ which is equivalent to $\delta = (2n+1) \pi/(4m)$ with $n = 0,1,\cdots,4m-1$.
It can also be seen that that $T'=1$ holds true in these cases.
Therefore, the energy spectrum is gapless at these $4m$ points.

\subsection{C. The Winding number of the generalized Bloch Hamiltonian}
Since the system respect the sublattice symmetry, we can use the winding number as the topological invariant
to characterize the zero-energy edge modes.
When $q=4m$, the winding number of the generalized Bloch Hamiltonian for each block is defined as
\begin{equation}
  w_{1,2}=\int_{0}^{2\pi} \frac{dk}{2\pi i} \partial_k \log \det \tilde{h}_{1,2}(k).
\end{equation}
We obtain
\begin{align}
  w_1 &= -w_2 \nonumber \\
      &=\int_{0}^{2\pi} \frac{dk}{2\pi i} \partial_k \log (\Omega - e^{-ik}) \nonumber \\
      &=\int_{0}^{2\pi} \frac{dk}{2\pi} \frac{e^{-ik}}{\Omega-e^{-ik}} \\
      &=\frac{i}{2\pi}\oint_\Gamma \frac{dz}{\Omega-z},
\end{align}
where $\Gamma$ denotes an integral path along a clockwise unit circle in the complex plane.
Clearly, we have
$w_1=-1$ if $|\Omega|<1$ and $w_1=0$, if $|\Omega| >1$, corresponding to topologically
nontrivial and trivial regions, respectively.

\end{widetext}


\begin{thebibliography}{}
\bibitem{Hasan}
M. Z. Hasan and C. L. Kane, Rev. Mod. Phys. \textbf{82}, 3045 (2010).
\bibitem{Qi}
X.-L. Qi and S.-C. Zhang, Rev. Mod. Phys. \textbf{83}, 1057 (2011).
\bibitem{Armitage}
N. P. Armitage, E. J. Mele, and Ashvin Vishwanath, Rev. Mod. Phys. \textbf{90}, 015001 (2018).

\bibitem{Rudner1}
M. S. Rudner and L. S. Levitov, Phys. Rev. Lett. \textbf{102}, 065703 (2009).

\bibitem{Esaki}
K. Esaki, M. Sato, K. Hasebe, and M. Kohmoto, Phys. Rev. B \textbf{84}, 205128 (2011).

\bibitem{Bardyn}
C.-E. Bardyn, M. A. Baranov, C. V. Kraus, E. Rico, A. \.Imamo\u glu, P. Zoller, and S. Diehl, New J. Phys. \textbf{15}, 085001 (2013).

\bibitem{Poshakinskiy}
A. V. Poshakinskiy, A. N. Poddubny, L. Pilozzi, and E. L. Ivchenko, Phys. Rev. Lett. \textbf{112}, 107403 (2014).

\bibitem{Zeuner}
J. M. Zeuner, M. C. Rechtsman, Y. Plotnik, Y. Lumer, S. Nolte, M. S. Rudner, M. Segev, and A. Szameit, Phys. Rev. Lett. \textbf{115}, 040402 (2015).

\bibitem{Malzard}
S. Malzard, C. Poli, and H. Schomerus, Phys. Rev. Lett. \textbf{115}, 200402 (2015).

\bibitem{Rudner2}
M. S. Rudner, M. Levin, and L. S. Levitov, arXiv:1605.07652 (2016).

\bibitem{Aguado2016SR}
P. San-Jose, J. Cayao, E. Prada, and R. Aguado, Sci. Rep. \textbf{6,} 21427 (2016).

\bibitem{Lee3}
T. E. Lee, Phys. Rev. Lett. \textbf{116}, 133903 (2016).

\bibitem{Molina} J. Gonz\' alez and R. A. Molina, Phys. Rev. Lett. \textbf{116,} 156803 (2016).

\bibitem{Joglekar2016PRA}
A. K. Harter, T. E. Lee, and Y. N. Joglekar, Phys. Rev. A \textbf{93,} 062101 (2016).

\bibitem{Zeng}
Q. B. Zeng, B. Zhu, S. Chen, L. You, and R. L\"u, Phys. Rev. A \textbf{94}, 022119 (2016).

\bibitem{Weimann}
S. Weimann, M. Kremer, Y. Plotnik, Y. Lumer, S. Nolte, K. G. Makris, M. Segev, M. C. Rechtsman, and A. Szameit, Nat. Mater. \textbf{16}, 433 (2017).

\bibitem{Leykam}
D. Leykam, K. Y. Bliokh, C. Huang, Y. D. Chong, and F. Nori, Phys. Rev. Lett. \textbf{118}, 040401 (2017).

\bibitem{Xu}
Y. Xu, S.-T. Wang, and L.-M. Duan, Phys. Rev. Lett. \textbf{118}, 045701 (2017).

\bibitem{Menke}
H. Menke and M. M. Hirschmann, Phys. Rev. B \textbf{95}, 174506 (2017).

\bibitem{Xiao}
L. Xiao, X. Zhan, Z. H. Bian, K. K. Wang, X. Zhang, X. P. Wang, J. Li, K. Mochizuki, D. Kim, N. Kawakami, W. Yi, H. Obuse, B. C. Sanders, and P. Xue, Nat. Phys. \textbf{13}, 1117 (2017).

\bibitem{Lieu}
S. Lieu, Phys. Rev. B \textbf{97}, 045106 (2018).
\bibitem{Zyuzin}
A. A. Zyuzin and A. Y. Zyuzin, Phys. Rev. B \textbf{97}, 041203(R) (2018).

\bibitem{Fan2018PRB}
A. Cerjan, M. Xiao, L. Yuan, and S. Fan, Phys. Rev. B \textbf{97,} 075128 (2018).

\bibitem{HZhou18}
H. Zhou, C. Peng, Y. Yoon, C. W. Hsu, K. A. Nelson, L. Fu, J. D. Joannopoulos, M. Soljaci\'c, and B. Zhen, Science \textbf{359}, 1009 (2018).

\bibitem{Yin}
C. Yin, H. Jiang, L. Li, R. L\"u, and S. Chen, Phys. Rev. A \textbf{97}, 052115 (2018).

\bibitem{Xiong}
Y. Xiong, J. Phys. Commun. \textbf{2}, 035043 (2018).

\bibitem{Shen}
H. Shen, B. Zhen, and L. Fu, Phys. Rev. Lett. \textbf{120},146402 (2018).

\bibitem{Kunst}
F. K. Kunst, E. Edvardsson, J. C. Budich, and E. J. Bergholtz, Phys. Rev. Lett. \textbf{121}, 026808 (2018).

\bibitem{Yao1}
S. Yao and Z. Wang, Phys. Rev. Lett. \textbf{121}, 086803 (2018).
\bibitem{Yao2}
S. Yao F. Song, and Z. Wang, Phys. Rev. Lett. \textbf{121}, 136802 (2018).
\bibitem{Gong}
Z. Gong, Y. Ashida, K. Kawabata, K. Takasan, S. Higashikawa, and M. Ueda, Phys. Rev. X \textbf{8}, 031079 (2018).
\bibitem{Kawabata2}
K. Kawabata, K. Shiozaki, and M. Ueda, Phys. Rev. B \textbf{98}, 165148 (2018).

\bibitem{Takata}
K. Takata and M. Notomi, Phys. Rev. Lett. \textbf{121}, 213902 (2018).

\bibitem{YChen}
Y. Chen and H. Zhai, Phys. Rev. B \textbf{98}, 245130 (2018).

\bibitem{WYi2018}
X. Qiu, T.-S. Deng, Y. Hu, P. Xue, and W. Yi, arXiv:1806.10268 (2018).

\bibitem{JHu2018}
Z. Yang and J. Hu, arXiv: 1807.05661 (2018).

\bibitem{HZhang2018}
H. Wang, J. Ruan, and H. Zhang, arXiv: 1808.06162 (2018).

\bibitem{Rechtsman2018}
A. Cerjan, S. Huang, K. P. Chen, Y. Chong, and M. C. Rechtsman, arXiv: 1808.09541 (2018).

\bibitem{Song2018}
L. Jin and Z. Song, arXiv: 1809.03139 (2018).

\bibitem{RYu2018}
K. F. Luo, J. J. Feng, Y. X. Zhao, and R. Yu, arXiv:1810.09231 (2018).

\bibitem{Kunst19}
F. K. Kunst and V. Dwivedi, arXiv:1812.02186 (2018).

\bibitem{Xu2019}
Y. Xu, arXiv:1812.03756 (2018).

\bibitem{Sato2019}
K. Kawabata, K. Shiozaki, M. Ueda, and M. Sato, arXiv:1812.09133 (2018).

\bibitem{HZhou19}
H. Zhou and J. Y. Lee, arXiv:1812.10490 (2018).

\bibitem{Kawabata1}
K. Kawabata, S. Higashikawa, Z. Gong, Y. Ashida, and M. Ueda, Nat. Commun. \textbf{10}, 297 (2019).

\bibitem{Herviou}
L. Herviou, J. H. Bardarson, N. Regnault, arXiv:1901.00010 (2019).

\bibitem{Fu2014}
V. Kozii and L. Fu, arXiv:1708.05841 (2017).
\bibitem{Bender1}
C. M. Bender and S. Boettcher, Phys. Rev. Lett. \textbf{80}, 5243 (1998).
\bibitem{Bender2}
C. M. Bender, Rep. Prog. Phys. \textbf{70}, 947 (2007).
\bibitem{Musslimani}
Z. H. Musslimani, K. G. Makris, R. El-Ganainy, and D. N. Christodoulides, Phys. Rev. Lett. \textbf{100}, 030402 (2008).
\bibitem{Feng2017Nat}{L. Feng, R. El-Ganainy, and L. Ge, Nat. Photonics \textbf{11,} 752 (2017).}

\bibitem{Thomale1}
C. H. Lee, S. Imhof, C. Berger, F. Bayer, J. Brehm, L. W. Molenkamp, T. Kiessling, and R. Thomale, Commun. Phys.
\textbf{1,} 39 (2018).

\bibitem{Ueda2019PRL}
T. Liu, Y.-R. Zhang, Q. Ai, Z. Gong, K. Kawabata, M. Ueda, and F. Nori,
Phys. Rev. Lett. \textbf{122,} 076801 (2019).

\bibitem{Edvardsson2018}
E. Edvardsson, F. K. Kunst, and E. J. Bergholtz, arXiv:1812.09060.
\bibitem{Luo2019}
X.-W Luo and C. Zhang, arXiv:1903.02448.

\bibitem{Aubry}
S. Aubry and G. Andr\'e, Ann. Isr. Phys. Soc. \textbf{3}, 133 (1980).
\bibitem{Harper}
P. G. Harper, Proc. Phys. Soc. London Sect. A \textbf{68}, 874 (1955).

\bibitem{Siggia83}
S. Ostlund, R. Pandit, D. Rand, H. J. Schellnhuber, and E. D. Siggia, Phys. Rev. Lett. \textbf{50}, 1873 (1983).
\bibitem{Kohmoto83}
M. Kohmoto, Phys. Rev. Lett. \textbf{51}, 1198 (1983).
\bibitem{DasSarma88}
S. Das Sarma, S. He, and X. C. Xie, Phys. Rev. Lett. \textbf{61}, 2144 (1988).
\bibitem{DasSarma90}
S. Das Sarma, S. He, and X. C. Xie, Phys. Rev. B \textbf{41}, 5544 (1990).
\bibitem{DasSarma09}
J. Biddle, B. Wang, D. J. Priour Jr., and S. Das Sarma, Phys. Rev. A \textbf{80}, 021603(R) (2009).
\bibitem{DasSarma10}
J. Biddle and S. Das Sarma, Phys. Rev. Lett. \textbf{104}, 070601 (2010).

\bibitem{LJLang}
L.-J. Lang, X. Cai, and S. Chen, Phys. Rev. Lett. \textbf{108}, 220401 (2012).

\bibitem{Zilberberg}
Y. E. Kraus, Y. Lahini, Z. Ringel, M. Verbin, and O. Zilberberg, Phys. Rev. Lett. \textbf{109}, 106402 (2012).
\bibitem{Zilberberg2012b}
{Y. E. Kraus and O. Zilberberg, Phys. Rev. Lett. \textbf{109,} 116404 (2012).}
\bibitem{Ganeshan}
S. Ganeshan, K. Sun, and S. Das Sarma, Phys. Rev. Lett. \textbf{110}, 180403 (2013).
\bibitem{Cai}
X. Cai, L.-J. Lang, S. Chen, and Y. Wang, Phys. Rev. Lett. \textbf{110}, 176403 (2013).
\bibitem{Chong}
F. Liu, S. Ghosh, and Y. D. Chong, Phys. Rev. B \textbf{91}, 014108 (2015).
\bibitem{Hu}
J. Wang, X.-J. Liu, G. Xianlong, and H. Hu, Phys. Rev. B \textbf{93}, 104504 (2016).
\bibitem{Zeng2}
Q. B. Zeng, S. Chen, and R. L\"u, Phys. Rev. B \textbf{94}, 125408 (2016).
\bibitem{Yi}
C. M. Dai, W. Wang, and X. X. Yi, Phys. Rev. A \textbf{98}, 013635 (2018).

\bibitem{Needham}
T. Needham, \emph{Visual Complex analysis} (Oxford University Press, USA, 1999).

\bibitem{ChiuReview}
C.-K. Chiu, J. C. Y. Teo, A. P. Schnyder, and S. Ryu, Rev. Mod. Phys. \textbf{88}, 035005 (2016).

\bibitem{SM}
Supplementary Materials.

\bibitem{footnote1}{Here the generalized Bloch Hamiltonian $\tilde{H}$ is used instead of $H(k)$ because of the breakdown of the bulk-edge correspondence when $\gamma\neq 0$.}

\bibitem{Hastings}
T. A. Loring and M. B. Hastings, Europhys. Lett. \textbf{92,} 67004 (2010).

\bibitem{WKChen}
W.-K. Chen, \emph{The Circuits and Filters Handbook}, 3rd ed. (CRC Press, Inc., Boca Raton, FL, USA, 2009).
\bibitem{Thomale3}
T. Hofmann, T. Helbig, C. H. Lee, M. Greiter, and R. Thomale, arXiv:1809.08687.

\bibitem{footnoteHaper}{$H_{2D}= \sum_{j_x,j_y} t(1+\gamma) \hat{c}_{j_x,j_y}^\dagger \hat{c}_{j_x+1,j_y} + t(1-\gamma) \hat{c}_{j_x+1,j_y}^\dagger \hat{c}_{j_x,j_y}
+t\frac{i\lambda}{2} [ e^{-i 2\pi\alpha j_x} (\hat{c}_{j_x,j_y+1}^\dagger \hat{c}_{j_x+1,j_y} + \hat{c}_{j_x+1,j_y+1}^\dagger \hat{c}_{j_x,j_y})
+H.c.]$.}

\bibitem{Lu2019}
Y. Lu, N, Jia, L. Su, C. Owens, G. Juzeliunas, D. I. Schuster, and J. Simon, Phys. Rev. B \textbf{99}, 020302(R) (2019).
\bibitem{Thomale2}
S. Imhof, C. Berger, F. Bayer, J. Brehm, L. W. Molenkamp, T. Kiessling, F. Schindler, C. H. Lee, M. Greiter, T. Neupert, and R. Thomale, Nat. Phys. \textbf{14}, 925 (2018).

\bibitem{footnote2}{$1/R=t$, $1/R_0=t\gamma$, $1/Z_j=\lambda t i\cos(2\pi\alpha j+\delta)$,
$1/R_1^\prime=1/R_L^\prime =t(1-\gamma)+\text{Re}(E)$ and $1/Z_1^\prime=-1/Z_1-\text{Im}(E)$,
$1/Z_L^\prime=-1/Z_{L-1}-\text{Im}(E)$, and $1/R_j^\prime=2t+\text{Re}(E)$ and
$1/Z_j^\prime=-1/Z_{j-1}-1/Z_{j+1}-\text{Im}(E)$ when $1<j<L$.}

\end{thebibliography}

\begin{thebibliography}{99}
\bibitem{WangZhong1}{S. Yao and Z. Wang, Phys. Rev. Lett. \textbf{121}, 086803 (2018).}

\bibitem{GeneralBloch}{K. Yokomizo and S. Murakami, arXiv:1902.10958 (2019).}

\end{thebibliography}
\end{document}